\newlength{\savearraycolsep}
\newtheorem{definition}{Definition}
\newtheorem{lemma}{Lemma}
\newtheorem{corollary}{Corollary}
\title{Solving Cyclic Longest Common Subsequence in Quadratic Time}
\author{Andy Nguyen\thanks{Department of Computer Science, Stanford University}}
\begin{document}
\bibliographystyle{acm}
\maketitle

\begin{abstract}
We present a practical algorithm for the cyclic longest common subsequence ($CLCS$) problem that runs in $O(mn)$ time, where $m$ and $n$ are the lengths of the two input strings.  While this is not necessarily an asymptotic improvement over the existing record, it is far simpler to understand and to implement.
\end{abstract}

\section{Introduction}
The longest common subsequence ($LCS$) problem is a classic problem whose myriad extensions are applicable to many fields ranging from string matching for DNA sequence alignment to dynamic time warping for shape distance.  While the original problem considers strings with a beginning and an end as input, a natural extension is to consider cyclic strings, where the starting point is not fixed on either string.  This extension is useful for matching such objects as DNA plasmids and closed curves.  In this paper we present a practical algorithm for the cyclic longest common subsequence ($CLCS$) problem that runs in $O(mn)$ time, where $m$ and $n$ are the lengths of the two input strings.

\section{Related Work}
The state-of-the-art algorithm for regular $LCS$ comes from~\cite{MP80}, who apply the standard dynamic programming algorithm with the ``Four Russians'' speedup~\cite{ADKF70} to achieve a running time of $O(n^2 / \lg n)$.  There are also algorithms (cite) that achieve better runtimes depending on the input, but all of these algorithms are asymptotically slower in the worst case.

The cyclic version of $LCS$ has seen more recent improvements.  The first major improvement comes from~\cite{M90}, which uses a divide-and-conquer strategy to solve the problem in $O(mn \lg m)$ runtime.  Since then, this approach has been applied to generalizations to the problem, and practical optimizations have been discovered; see~\cite{SFC07} for example.  We describe the setup used in these papers in our preliminaries, though we build a different approach on top of this setup to arrive at our new algorithm. \cite{LMS98} provides a solution that runs in $O(n^2)$ time on arbitrary inputs, with asymptotically better performance on similar inputs; however, this solution requires considerable amounts of machinery both to understand and to implement.

\section{Preliminaries}

Given a string $A$ of length $n$ and an integer $k$, $0 \leq k < n$, we can define $cut(A, k)$ = $substring(A, k, n) + substring(A, 0, k)$.  For example, $cut(``\verb|abcd|", 2) = ``\verb|cd|" + ``\verb|ab|" = ``\verb|cdab|"$, while $cut(``\verb|abcd|", 0) = ``\verb|abcd|" + ``" = ``\verb|abcd|"$.  We can extend this definition to any integer $k$ by defining $cut(A, k) = cut(A, k\textrm{ mod } n)$ when $k < 0$ or $k \geq n$.

Let us define the following:

\begin{definition}
Let $A$ and $B$ be strings of length $m$ and $n$ respectively, where $m \leq n$.  A \emph{cyclic longest common subsequence} of $A$ and $B$ $CLCS(A,B)$ is a string that satisfies the following properties:
\begin{enumerate}
\item $CLCS(A,B) = LCS(cut(A,i),cut(B,j))$ for some integers $i$ and $j$.
\item $|CLCS(A,B)| \geq |LCS(cut(A,i),cut(B,j))|$ for all integers $i$ and $j$.
\end{enumerate}
\end{definition}

We can first observe that $CLCS$ is no easier than $LCS$.  To see this, we observe that we can solve $LCS(A,B)$ using $CLCS$ by prepending $A$ and $B$ with $m$ \verb|x|'s, where \verb|x| is a symbol not found in either string.  This forces $CLCS$ to keep $A$ and $B$ aligned with each other to match all the \verb|x|'s, allowing us to extract $LCS(A,B)$ from the result.

On the other hand, we can also show that $CLCS$ is not much harder than $LCS$.  We could solve $CLCS$ by solving $mn$ instances of $LCS$, corresponding to each possible pair of cuts of the strings $A$ and $B$.  This implies that we have a solution to CLCS that runs in $O(m^2 n^2)$ time.  In fact, we can do even better:

\begin{lemma}
There is some cyclic longest common subsequence of $A$ and $B$ $CLCS(A,B)$ that satisfies the following properties:
\begin{enumerate}
\item $CLCS(A,B) = LCS(cut(A,k),B)$ for some integer $k$.
\item $|CLCS(A,B)| \geq |LCS(cut(A,i),cut(B,j))|$ for all integers $i$ and $j$.
\end{enumerate}
\end{lemma}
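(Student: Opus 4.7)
The plan is to take any optimal $LCS(cut(A,i), cut(B,j))$ achieving $|CLCS(A,B)| = L$, and exhibit the same $L$ matched character pairs as a common subsequence of $cut(A,k)$ and $B$ for a suitably chosen $k$. Since the second property of the $CLCS$ gives $|LCS(cut(A,k),B)| \leq L$ for every $k$, exhibiting a matching of size $L$ forces equality and yields property~1.

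To set up notation, let the optimal matching consist of pairs $(a_1, b_1), \ldots, (a_L, b_L)$ that are strictly increasing in both coordinates of $cut(A, i) \times cut(B, j)$. Translate them back to positions in the original strings via $\alpha_\ell = (i + a_\ell) \bmod m$ and $\beta_\ell = (j + b_\ell) \bmod n$. Then each of $(\alpha_\ell)$ and $(\beta_\ell)$ is \emph{cyclically} increasing around its string, in the sense that its linear-order sequence has at most one descent, corresponding to the wraparound at position~$0$.

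Now the key step: choose $s$ so that $\beta_s$ is the linearly smallest element of $\{\beta_1, \ldots, \beta_L\}$, set $k = \alpha_s$, and reindex the pairs cyclically starting at $s$ to get
\[
(\alpha_s, \beta_s), (\alpha_{s+1}, \beta_{s+1}), \ldots, (\alpha_{s-1}, \beta_{s-1}).
\]
Starting at the linear minimum of a cyclically increasing sequence linearizes it, so the $\beta$-coordinates are strictly increasing in $B$'s linear order. The analogous statement for $\alpha$ holds as well via the map $\alpha \mapsto (\alpha - k) \bmod m$, which identifies the cyclic order on positions of $A$ rooted at $k$ with the linear order on positions of $cut(A, k)$; hence the $\alpha$-coordinates are strictly increasing in the linear order of $cut(A, k)$. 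Therefore the reindexed pairs form a valid common subsequence of $cut(A, k)$ and $B$ of length $L$, establishing property~1.

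The main obstacle I expect is the cyclic-versus-linear bookkeeping, in particular the claim that a cyclically increasing sequence around a cycle of length $m$, starting at an element $x_1$, becomes linearly increasing in $\{0, 1, \ldots, m-1\}$ after subtracting $x_1$ modulo $m$. This underlies both the $\alpha$ and $\beta$ arguments, and although it reduces to a direct case split on whether the wraparound in the original sequence occurs before or after the chosen starting point, mis-stating what ``cyclically increasing rooted at $x_1$'' means is the most likely source of error.
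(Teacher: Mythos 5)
Your argument is correct: rotating the matched pairs to start at the linearly smallest $\beta$-position and applying the shift $\alpha \mapsto (\alpha-k) \bmod m$ do linearize the two cyclically increasing position sequences (the latter works from an arbitrary starting index because the underlying offsets $a_\ell$ all lie in a window of length $m$, so there is no double wraparound), hence the $L$ pairs form a common subsequence of $cut(A,k)$ and $B$, and maximality of $L$ forces equality. The paper gives no proof of its own, deferring to~\cite{M90}; your re-alignment argument is essentially the standard one underlying that citation, the only nit being the trivial case $L=0$, where any $k$ works vacuously.
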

\begin{proof}
Given in~\cite{M90}.
\end{proof}

This lemma allows us to run only $m$ instances of $LCS$, giving us a runtime of $O(m^2 n)$.  Now let's see if we can do better.  To do so, let's take a short detour to look at regular $LCS$.  Recall that we can solve $LCS(A,B)$ in $O(mn)$ time with the use of a two-dimensional dynamic programming (DP) table that stores lengths $len(i,j)$ and parent pointers $parent(i,j)$.  Normally we look at this table as a two-dimensional array; however, we can also look at this table as a directed grid graph $G_{A,B}$ as follows (Figure~\ref{fig:tabletograph}):

\begin{figure}[htb]
\centering
\includegraphics[width=.95\linewidth]{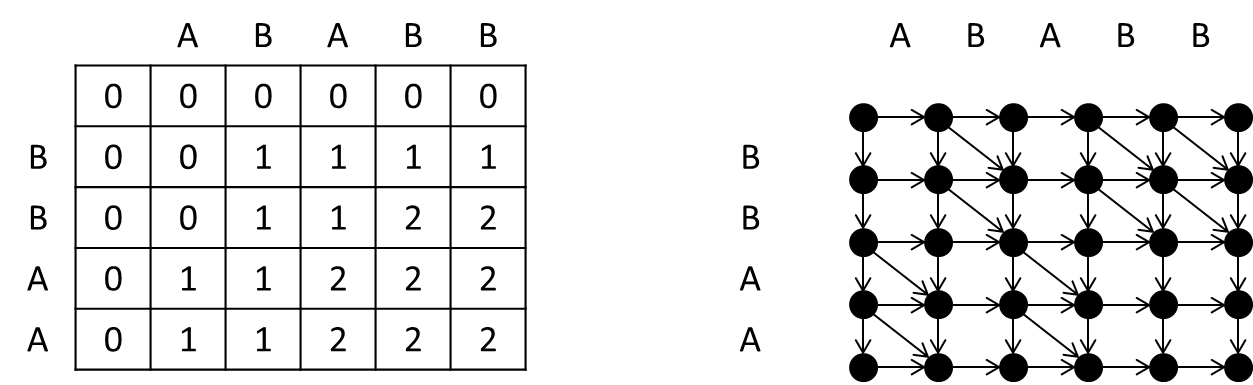}
\caption{Visualizing the DP table as a grid graph.}
\label{fig:tabletograph}
\end{figure}

\begin{itemize}
\item For each entry $e_v$ in the table we have a single node $v$.
\item For each adjacent pair of entries $e_u$ and $e_v$ in the table where $e_u$ is considered when computing $e_v$, we have a directed edge from $u$ to $v$.  Note that all edges point down and/or to the right, so the graph is acyclic.  We see that all of the horizontal and vertical edges are present in the graph, and in addition we have diagonal edges wherever the corresponding characters in the two strings match.
\end{itemize}

For our convenience, we will fix the orientation of the graph as shown in Figure~\ref{fig:tabletograph}, namely, that $(0,0)$ is the top left and $(m, n)$ is the bottom right.  In this orientation, the first coordinate denotes the row, and the second coordinate denotes the column.

\begin{lemma}
Finding an $LCS(A,B)$ is equivalent to finding a shortest path from $(0,0)$ to $(m, n)$ in $G_{A,B}$.
\end{lemma}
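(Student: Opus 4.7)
The plan is to assign appropriate edge weights to $G_{A,B}$ and then exhibit a length-preserving bijection between common subsequences of $A$ and $B$ and directed paths from $(0,0)$ to $(m,n)$. Specifically, I would give every diagonal (match) edge weight $0$ and every horizontal or vertical edge weight $1$. Under this convention, ``shortest path'' will track the complement of the LCS length, and shortest paths will correspond exactly to longest common subsequences.

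First I would observe that any directed path from $(0,0)$ to $(m,n)$ must increase the row coordinate by $m$ and the column coordinate by $n$. Since horizontal edges advance only the column, vertical edges only the row, and diagonal edges both, a path using $d$ diagonal edges must use exactly $m-d$ vertical and $n-d$ horizontal edges, and therefore has total weight $m + n - 2d$. Thus minimizing path weight is equivalent to maximizing $d$, the number of diagonal edges used.

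The main step, and the only real obstacle, is to show that $d$ equals the length of a common subsequence determined by the path, and conversely that every common subsequence of length $d$ arises from some such path. For the forward direction, list the diagonal edges used along the path in order; because $G_{A,B}$ is acyclic with edges pointing only right and/or down, their source rows $i_1 < i_2 < \cdots < i_d$ and source columns $j_1 < j_2 < \cdots < j_d$ are strictly increasing, and by construction of $G_{A,B}$ a diagonal edge exists at $(i_k, j_k)$ only when $A$ and $B$ agree at those positions. So the matched characters form a common subsequence of length $d$. For the reverse direction, given a common subsequence realized by indices $i_1 < \cdots < i_d$ in $A$ and $j_1 < \cdots < j_d$ in $B$, one builds a path by taking the corresponding $d$ diagonal edges and connecting consecutive diagonals (and the path endpoints to the first and last diagonals) using any interleaving of horizontal and vertical edges; such connectors always exist since all horizontal and vertical edges are present in $G_{A,B}$.

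Combining the two directions, shortest paths from $(0,0)$ to $(m,n)$ correspond exactly to longest common subsequences, and the LCS itself is recovered by reading off the characters along the diagonal edges of the path. I expect the bookkeeping for the bijection (keeping the indexing conventions straight between nodes of $G_{A,B}$ and character positions of $A$ and $B$) to be the only finicky part; the weight computation and the equivalence of optimizations are immediate once the bijection is in place.
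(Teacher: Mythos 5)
Your argument is correct, and it is in fact more than the paper offers: the paper disposes of this lemma by citing Maes (1990), so you are supplying the standard path--subsequence correspondence that the paper leaves to the reference. Your bijection is sound in both directions (strictly increasing rows and columns of the diagonal edges along a path give a common subsequence of length $d$; conversely any common subsequence can be threaded into a path because all horizontal and vertical edges are present), and the counting $m-d$ vertical, $n-d$ horizontal edges is exactly the right observation. The one point to fix before splicing this into the paper is the weight convention: you re-weight diagonal edges to $0$ and horizontal/vertical edges to $1$, whereas the paper works with \emph{unit} weights on all edges of $G_{A,B}$ --- it says explicitly that $LCS$ reduces to shortest paths on a DAG ``with unit edge lengths,'' the Corollary about DP parent pointers forming a shortest path tree is stated for that graph, and the Conclusions stress that the later re-rooting argument ``relies heavily on the fact that all edges in the graph have weight 1.'' Fortunately your own count settles this with no extra work: with unit weights a path using $d$ diagonals has total length $(m-d)+(n-d)+d = m+n-d$, which, like your $m+n-2d$, is minimized exactly when $d$ is maximized, so the two conventions have identical sets of shortest paths. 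State the lemma's proof with the unit-weight count (or note this equivalence explicitly) so that the object you prove the lemma about is the same graph the rest of the paper manipulates.
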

\begin{proof}
Given in~\cite{M90}.
\end{proof}

\begin{corollary}
The set of parent pointers computed by any DP solution to $LCS(A,B)$ (when treated as undirected edges) forms a shortest path tree of $G_{A,B}$ rooted at $(0,0)$.
\end{corollary}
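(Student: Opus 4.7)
The plan is to proceed by induction on $i+j$, showing that for every cell $(i,j)$ the unique path in the parent-pointer tree from $(0,0)$ to $(i,j)$ has length equal to the shortest-path distance from $(0,0)$ to $(i,j)$ in $G_{A,B}$. This is exactly what it means for the parent pointers to form a shortest path tree rooted at $(0,0)$.

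First I would verify that the parent pointers do in fact form a spanning tree. Every cell other than $(0,0)$ has exactly one parent pointer, and each such pointer goes to one of $(i-1,j)$, $(i,j-1)$, or $(i-1,j-1)$, so repeatedly following parents strictly decreases $i+j$ and must terminate at $(0,0)$. Thus the undirected parent edges form a connected, acyclic subgraph spanning all $(m+1)(n+1)$ cells.

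Next I would record a clean closed form for the shortest-path distance. Any path from $(0,0)$ to $(i,j)$ using $d$ diagonal, $h$ horizontal, and $v$ vertical edges satisfies $h+d=j$ and $v+d=i$, giving total length $i+j-d$. Minimising length is therefore the same as maximising $d$, which by the preceding lemma equals $len(i,j)$, the length of $LCS(A[1..i],B[1..j])$. So the shortest-path distance from $(0,0)$ to $(i,j)$ is exactly $i+j-len(i,j)$.

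The induction is then routine. The base case $(0,0)$ gives tree distance $0 = 0+0-len(0,0)$. For the inductive step, let $(i',j')$ be the parent of $(i,j)$. By the DP recurrence, either $(i',j')\in\{(i-1,j),(i,j-1)\}$ with $len(i',j')=len(i,j)$, or $(i',j')=(i-1,j-1)$ with $A_i=B_j$ and $len(i',j')=len(i,j)-1$. In each case the edge $(i',j')\to(i,j)$ exists in $G_{A,B}$, and applying the induction hypothesis to $(i',j')$ together with a one-line arithmetic check yields tree distance $i+j-len(i,j)$.

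The one subtlety to watch is that the diagonal edge $(i-1,j-1)\to(i,j)$ is present in $G_{A,B}$ only when $A_i=B_j$; but this is precisely the condition under which the DP recurrence permits $(i-1,j-1)$ as a parent, so every parent-pointer edge is a genuine edge of $G_{A,B}$. I do not expect any real obstacle here, only bookkeeping to state the three cases and the arithmetic identity uniformly.
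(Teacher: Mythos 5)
Your proof is correct, and in fact it supplies an argument the paper itself omits: the corollary is stated with no proof at all, riding on Lemma 2 (whose proof is deferred to the cited reference), so your explicit induction is exactly the bookkeeping needed to justify it, including the paper's side remark that the conclusion holds regardless of how ties are broken. The key identity you use, namely that the shortest-path distance from $(0,0)$ to $(i,j)$ equals $i+j-len(i,j)$ because path length is $i+j-d$ and maximizing the number $d$ of diagonal edges is the same as maximizing the matched characters, is precisely the correspondence underlying Lemma 2, so your route is the natural completion of the paper's intended argument rather than a departure from it. One small point to tighten: Lemma 2 as stated only concerns the single pair $(0,0)$ to $(m,n)$, so when you invoke it to get $d_{\max}=len(i,j)$ for an arbitrary cell, you should note that the subgrid on rows $0,\dots,i$ and columns $0,\dots,j$ is exactly the graph associated with the prefixes $A[1..i]$ and $B[1..j]$ (or prove the diagonal-count claim directly); likewise your case analysis implicitly uses that any parent pointer chosen by a valid DP attains the maximum in the recurrence, i.e.\ $len$ of the parent equals $len(i,j)$ or $len(i,j)-1$ accordingly, which is worth one explicit sentence. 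With those two remarks made explicit, the argument is complete.
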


Note that this holds true regardless of the choice of parent pointer in case of ties.

This means that we can solve $LCS$ by solving a shortest path problem on a directed acyclic graph with unit edge lengths.

\section{Algorithm}

According to Lemma 2, we can solve $CLCS$ by solving $m$ shortest-path problems, one for each cut of $A$.  These shortest-path problems are closely related, as we see when we construct the graph $G_{AA,B}$ (Figure~\ref{fig:biggraph}).  Then the shortest paths we want to find are the ones from $(0,0)$ to $(m,n)$, from $(1,0)$ to $(m+1,n)$, $\dots$, and from $(m-1,0)$ to $(2m-1,n)$.  Now, we haven't done anything that will save us computation time; it will still cost us $O(m^2 n)$ time to find these $m$ shortest paths.  However, the hope is that since these paths all live on the same graph, we can reuse some work between shortest path computations.

\begin{figure}[htb]
  \centering
  \includegraphics[width=0.5\linewidth]{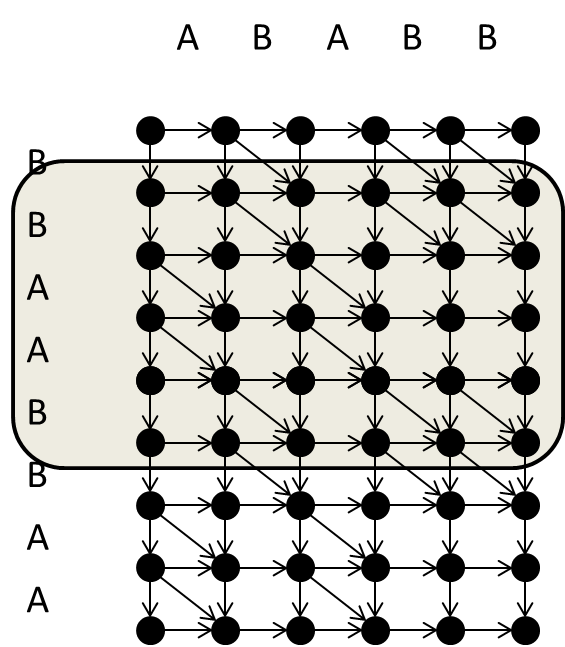}
  \caption{Constructing the graph $G_{AA,B}$.}
  \label{fig:biggraph}
\end{figure}

As we have fixed the orientation of the graph, we can call one edge \emph{lower} than another with respect to this orientation, and we can do so similarly for paths.  We then define the following:

\begin{definition}
A \emph{lowest shortest path tree} of $G_{A,B}$ is a shortest path tree where for every node $v$, the path from the root to $v$ contained in the shortest path tree is lower than any other shortest path from the root to the node.
\end{definition}

First, we show how to compute a lowest shortest path tree efficiently.

\begin{lemma}
The set of parent pointers computed by the DP solution that favors $\leftarrow$  first, then $\nwarrow$, and finally $\uparrow$ in case of ties forms a lowest shortest path tree of $G_{A,B}$ rooted at $(0,0)$.
\end{lemma}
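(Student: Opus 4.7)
The plan is to induct on a topological ordering of $G_{A,B}$ (say row-major), with the invariant that for every already-processed vertex $w$ the tree path $T_w$ obtained by following parent pointers back to $(0,0)$ is the lowest shortest path to $w$. The base case $w=(0,0)$ is immediate.

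For the inductive step at $v=(i,j)$, I first classify the three possible last edges of a shortest path to $v$ by the maximum row the path attains at column $j-1$: a $\leftarrow$ last edge visits $(i,j-1)$ and so has max row exactly $i$ there; a $\nwarrow$ last edge exits column $j-1$ at $(i-1,j-1)$, giving max row exactly $i-1$; and a $\uparrow$ last edge enters column $j$ from a row $\leq i-1$, forcing the max row at column $j-1$ to be $\leq i-1$. Interpreting ``lowest'' as attaining the largest max row at every column simultaneously, the tiebreak order $\leftarrow$ then $\nwarrow$ then $\uparrow$ is precisely the ordering that, among legal choices, maximises the row attained at column $j-1$.

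To promote this local fact into a statement about every column, I combine it with the induction hypothesis: for any shortest path $P = P' \cdot e$ to $v$, the prefix $P'$ is a shortest path to the parent it arrives from, so by IH the corresponding tree path dominates $P'$ at every earlier column. It therefore suffices to compare pairwise the three candidate extensions $T_{(i,j-1)} \cdot \rightarrow$, $T_{(i-1,j-1)} \cdot \searrow$, and $T_{(i-1,j)} \cdot \downarrow$ (restricted to those that are legal). For the $\leftarrow$ versus $\nwarrow$ comparison I would show that $T_{(i-1,j-1)} \cdot \downarrow$ is itself a shortest path to $(i,j-1)$ whenever both $\leftarrow$ and $\nwarrow$ are legal at $v$, so the IH on $(i,j-1)$ gives $T_{(i,j-1)}(c) \geq T_{(i-1,j-1)}(c)$ for all $c<j-1$. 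For the $\nwarrow$ versus $\uparrow$ comparison I would similarly show that, in the branch where $\leftarrow$ is illegal but both $\nwarrow$ and $\uparrow$ are legal, the last edge of $T_{(i-1,j)}$ is forced in such a way that $T_{(i-1,j)}(c) \leq T_{(i-1,j-1)}(c)$ for $c < j-1$, so the $\nwarrow$ extension dominates the $\uparrow$ extension at every column.

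The main obstacle is discharging these sub-lemmas: verifying in each tiebreak branch that auxiliary paths like $T_{(i-1,j-1)} \cdot \downarrow$ really are shortest paths, which rests on checking the correct equalities among $d(i,j)$, $d(i-1,j)$, $d(i,j-1)$, $d(i-1,j-1)$ implied by the branch we are in. These checks use only the standard recurrence $d(i,j) = \min$ of the three neighboring distances plus the corresponding edge weight, and amount to a small finite case analysis. Once they are in place, the three-way pairwise comparison of candidate extensions holds column by column, the tiebreak always selects the maximum among legal choices, and the inductive step concludes that $T_v$ is the lowest shortest path to $v$.
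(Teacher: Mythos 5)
There is a genuine gap, and it sits exactly at the step you flag as the ``main obstacle.'' In the $\leftarrow$ versus $\nwarrow$ branch you propose to invoke the induction hypothesis at $(i,j-1)$ by arguing that $T_{(i-1,j-1)}\cdot\downarrow$ is a shortest path to $(i,j-1)$. But in precisely the branch where you need this, it is false. In this paper all edges have unit weight, so ``$\leftarrow$ legal at $v=(i,j)$'' means $d(i,j)=d(i,j-1)+1$ and ``$\nwarrow$ legal'' means $d(i,j)=d(i-1,j-1)+1$; together these force $d(i,j-1)=d(i-1,j-1)$ (equivalently, in LCS terms, $len(i,j-1)=len(i-1,j-1)+1$). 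Hence appending the vertical edge to $T_{(i-1,j-1)}$ produces a path to $(i,j-1)$ of length $d(i,j-1)+1$, which is not shortest, and the induction hypothesis---which only compares the tree path against \emph{shortest} paths to $(i,j-1)$---says nothing about it. What you actually need is the cross-node monotonicity statement ``if $len(i,j-1)=len(i-1,j-1)+1$ then $T_{(i,j-1)}(c)\ge T_{(i-1,j-1)}(c)$ for all $c<j-1$,'' which compares lowest shortest paths to two \emph{different} nodes; this is not delivered by the row-major IH plus a finite case analysis of the recurrence, because it is a global statement about entire paths, and it would need its own inductive proof. The $\nwarrow$ versus $\uparrow$ branch has the same problem one step removed: forcing the last edge of $T_{(i-1,j)}$ only settles column $j-1$, and the comparison at columns $c<j-1$ again reduces to an unproved domination between tree paths of distinct nodes (e.g.\ $T_{(i-1,j-1)}$ versus $T_{(i-2,j-1)}$ or $T_{(i-2,j)}$), so the recursion never bottoms out as written.

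Your overall architecture (topological induction, classify the last edge by the row attained at column $j-1$, then lift to all columns) is reasonable, and the domination facts you need are true---indeed they are consequences of the lemma itself---but the proposal does not prove them; the incorrect ``auxiliary path is shortest'' claim is doing all the hidden work. For comparison, the paper argues in the opposite direction: assume some node $v$ has a shortest path $p'$ strictly lower than the tree path $p$, trace both back from $v$ to their first divergence node $w$, and observe that at $w$ the tree path takes the most-preferred valid pointer ($\leftarrow$, then $\nwarrow$, then $\uparrow$), which is the locally lowest option, contradicting that $p'$ is lower. That argument avoids your cross-node comparisons by exploiting that $p$ and $p'$ coincide on the segment from $w$ to $v$, so only the behavior at the divergence point matters; if you want to salvage your induction, you either need to prove the monotonicity lemmas separately (by their own induction on the grid) or restructure along the paper's first-divergence lines.
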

\begin{proof}
Suppose for contradiction that there is some node $v$ where there is a lower shortest path $p'$ from $(0,0)$ to $v$ than the one $p$ contained in the shortest path tree.  Beginning at $v$ and tracing back towards $(0,0)$, $p'$ and $p$ must eventually diverge at some node $w$.  This divergence means that $p'$ and $p$ follow two different valid parent pointers.  But since $p$ follows the parent pointers that favor $\leftarrow$  first, then $\nwarrow$, and finally $\uparrow$, it follows that $p'$ cannot be lower than $p$.
\end{proof}

This lemma implies that maintaining the DP tiebreaking policy locally will yield a lowest shortest path tree globally.  We'll use this to our advantage to bound the work we need to do to extract all $m$ shortest paths.

If we compute the lowest shortest path tree of $G_{AA,B}$, we can immediately read off the shortest path from $(0,0)$ to $(m, n)$.  Before we can read the shortest path from $(1,0)$ to $(m + 1, n)$, we need to re-root the lowest shortest path tree at $(1,0)$.  We can see that this is equivalent to removing the top row entirely, and then fixing the remainder of the tree to restore its properties.

\begin{lemma}
If we remove the top row of nodes from the lowest shortest path tree rooted at $(0,0)$, we are left with at most two subtrees $L$ and (possibly) $R$.
\end{lemma}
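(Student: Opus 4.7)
The plan is to count the roots of the forest produced by deleting row 0 and to show there are at most two. Because every tree edge joins adjacent grid positions, the only surviving nodes whose tree-parent could have lived in row 0 are nodes in row 1. Consequently each connected component of the surviving forest has its new root in row 1, and the number of components equals the number of row 1 nodes whose parent pointer pointed into row 0. The lemma thus reduces to bounding this count by two.

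I would first dispose of $(1, 0)$: its only legal predecessor is $(0, 0)$ via $\uparrow$, so it is always a new root. For $(1, j)$ with $j \geq 1$, Lemma 3 says that the parent pointer is $\leftarrow$, i.e., $(1, j-1)$, whenever $(1, j-1)$ is a valid predecessor, i.e., whenever $d(1, j) = d(1, j-1) + 1$, where $d$ denotes shortest-path distance from $(0, 0)$. Hence $(1, j)$ becomes a new root precisely when $d(1, j) \neq d(1, j-1) + 1$.

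The main step is then to analyze the distance profile along row 1. Since row 1 corresponds to matching the single character in row position 1 of the row string against prefixes of $B$, the LCS value $\ell(1, j)$ lies in $\{0, 1\}$, equal to $0$ for $j < j^*$ and $1$ for $j \geq j^*$, where $j^*$ is the first index of that character in $B$ (if it exists at all). Translating via $d(1, j) = j + 1 - 2\ell(1, j)$, the distance increases by one at every step except the single step at $j = j^*$, where it drops by one. Thus at most one $j \geq 1$ violates $d(1, j) = d(1, j-1) + 1$, yielding at most one additional new root beyond $(1, 0)$, for a total of at most two subtrees $L$ and (possibly) $R$.

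The heart of the argument is the row 1 distance analysis; once one observes that row 1 encodes a single-character match, the at-most-one-drop property drops out immediately and the rest is routine bookkeeping. The only subtlety to watch is applying Lemma 3 correctly so that $\leftarrow$ really is preferred whenever valid, which is what ensures the non-$\leftarrow$ case produces at most one extra root.
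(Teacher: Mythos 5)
Your proof is correct, but it takes a genuinely different route from the paper's. The paper argues structurally about the tree edges crossing from row $0$ to row $1$: the vertical edge $(0,0)\to(1,0)$ is forced (giving $L$); any other vertical crossing edge is ruled out because the equally short path through $(1,0)$ along row $1$ is lower; and among diagonal crossing edges only the leftmost one can occur, by the same lowness/tie-break exchange argument (giving $R$). You instead count the roots of the surviving forest directly: a row-$1$ node is a new root exactly when $\leftarrow$ is not a valid shortest-path predecessor, and the row-$1$ DP profile $\ell(1,j)\in\{0,1\}$ jumps at most once, at the first occurrence of $A$'s first character in $B$, so besides the forced root $(1,0)$ there is at most one further root. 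Your version is more explicit---it pinpoints where $R$ is rooted and handles vertical and diagonal crossings in one stroke---while the paper's exchange argument computes no values and matches the style it reuses later for boundary nodes and re-rooting. Two small remarks: the step ``parent is $\leftarrow$ whenever $\leftarrow$ is valid'' is really the DP tie-breaking rule itself (which Lemma 3 then certifies produces the lowest shortest path tree), so your appeal to Lemma 3 is fine but slightly indirect; and your formula $d(1,j)=j+1-2\ell(1,j)$ presumes weight-$0$ diagonal edges, whereas the paper assigns every edge weight $1$, giving $d(1,j)=1+j-\ell(1,j)$, so at $j^{*}$ the distance stays flat rather than dropping by one---under either convention there is at most one index where $d(1,j)=d(1,j-1)+1$ fails, so your conclusion is unaffected.
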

\begin{proof}
We show this by showing there are at most two edges between the top row and the rest of the tree.  The vertical edge on the far left, from $(0,0)$ to $(1,0)$, must clearly be in the tree, and we denote the subtree rooted at $(1,0)$ to be $L$.  There cannot be any other vertical edges, since the vertical edge on the far left provides us with a lower path to any node in the second row.  Now consider the first diagonal edge, scanning from left to right, from $(0,k)$ to $(1,k+1)$.  We denote the subtree rooted at $(1,k+1)$ to be $R$.  By similar reasoning, there cannot be any other diagonal edges to the right of this one.
\end{proof}

Note that because of the tiebreaking rule we use for the lowest shortest path tree, any node in $R$ that is adjacent to a node in $L$ (to the left or diagonally up and left; we call these nodes in $R$ \emph{boundary nodes}) chose its parent to be in $R$ because that choice was \emph{strictly} better than the parent option in $L$.

We next note the following about the length entries in the DP table corresponding to $LCS(AA,B)$ (recall that these are lengths of the common subsequence, not lengths of the paths in the graph):
\begin{lemma}
For any entry $(i,j)$ in the table, $0 \leq len(i,j) - len(i,j-1) \leq 1$.
\end{lemma}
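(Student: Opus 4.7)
The plan is to prove the two inequalities separately by appealing to the semantic meaning of $len(i,j)$ as the length of a longest common subsequence of the prefix of $AA$ of length $i$ and the prefix of $B$ of length $j$.

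For the lower bound $len(i,j) - len(i,j-1) \geq 0$, I would observe that any common subsequence of the length-$i$ prefix of $AA$ and the length-$(j-1)$ prefix of $B$ is automatically also a common subsequence of the length-$i$ prefix of $AA$ and the length-$j$ prefix of $B$, since extending $B$ only adds more candidate matching characters at the end. Hence $len(i,j) \geq len(i,j-1)$.

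For the upper bound $len(i,j) - len(i,j-1) \leq 1$, I would take an optimal LCS $S$ of the length-$i$ prefix of $AA$ and the length-$j$ prefix of $B$ (so $|S| = len(i,j)$) and split into two cases depending on whether $S$ uses the character $B[j]$. If $S$ does not use $B[j]$, then $S$ is already a common subsequence of the length-$i$ prefix of $AA$ and the length-$(j-1)$ prefix of $B$, so $len(i,j-1) \geq |S| = len(i,j)$, which is even stronger. If $S$ does use $B[j]$, then deleting the matched character from the end of $S$ yields a common subsequence of the length-$i$ prefix of $AA$ and the length-$(j-1)$ prefix of $B$ of length $len(i,j) - 1$, so $len(i,j-1) \geq len(i,j) - 1$.

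There is no real obstacle here; the argument is essentially a one-character surgery on an optimal subsequence. The only thing to be careful about is correctly handling the two cases in the upper bound (whether $B[j]$ participates in the chosen LCS), so that the subsequence produced in each case is legitimately a common subsequence of the required prefixes. An entirely analogous statement holds for adjacent rows rather than adjacent columns; the same proof goes through by swapping the roles of $A$ and $B$, but the lemma as stated only needs the column version.
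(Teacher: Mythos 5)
Your proof is correct, but it takes a genuinely different route from the paper. You argue semantically: $len(i,j)$ is the length of an LCS of the length-$i$ prefix of $AA$ and the length-$j$ prefix of $B$, the lower bound follows from monotonicity of common subsequences under prefix extension, and the upper bound follows by one-character surgery on an optimal subsequence $S$ (casing on whether $S$ matches $B[j]$; note the small implicit step that if $S$ does use $B[j]$, the character matched to it must be the \emph{last} character of $S$, since no character of $S$ can be matched beyond position $j$ in $B$ --- worth stating, but harmless). The paper instead stays inside its DP/graph machinery: it assumes $len(i,j) - len(i,j-1) > 1$ for contradiction, traces back along parent pointers from $(i,j)$ until the path first enters column $j-1$ (via $\leftarrow$, preserving the length, or $\nwarrow$, dropping it by one), and observes that the resulting entry sits in column $j-1$ weakly above $(i,j-1)$ with a strictly larger value, contradicting optimality of $len(i,j-1)$ since values can be propagated straight down a column. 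Your argument is more elementary and self-contained --- it needs nothing about parent pointers or tie-breaking, and symmetrically yields the row version for free --- while the paper's argument has the virtue of living entirely in the shortest-path/parent-pointer picture that the rest of the rerooting analysis is built on. Both establish the same inequality, so the proposal is a valid substitute.
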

\begin{proof}
Clearly $len(i,j) \geq len(i,j-1)$.  Now suppose for contradiction that $len(i,j) - len(i,j-1) > 1$.  We trace back along the parent pointers starting from $(i,j)$ until we enter column $j-1$.  The moment we do so, we find an entry where the length is either $len(i,j)$ if we followed a $\leftarrow$ or $len(i,j)-1$ if we followed a $\nwarrow$.  Either way, this entry is above $(i,j-1)$ in the same column, yet strictly larger, so we could have filled this entry straight down to $(i,j-1)$, which violates the optimality of $len(i,j-1)$.
\end{proof}

This lemma combined with our earlier observation about $L$ and $R$ leads us to the following key fact:
\begin{corollary}
Let $(i,j)$ be such that $(i,j) \in R$ but $(i,j-1) \in L$.  Then $len(i,j) = len(i,j-1) + 1$.
\end{corollary}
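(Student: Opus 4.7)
The plan is to combine the bound from Lemma 5 with the tiebreaking rule that defines the lowest shortest path tree. Lemma 5 already pins $len(i,j) - len(i,j-1)$ to either $0$ or $1$, so the entire content of the corollary is to rule out equality.

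First I would assume for contradiction that $len(i,j) = len(i,j-1)$. The horizontal edge from $(i,j-1)$ to $(i,j)$ is always present in $G_{AA,B}$, and following a $\leftarrow$ step means the incoming $len$-value equals $len(i,j-1)$. Under the assumption of equality, this candidate parent pointer achieves the optimal $len(i,j)$, i.e., $\leftarrow$ is a \emph{valid} parent choice for $(i,j)$ in any DP solution.

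Next I would invoke the tiebreaking policy from Lemma 4: among valid parent options the rule prefers $\leftarrow$ above $\nwarrow$ and $\uparrow$. Hence if $\leftarrow$ is valid, the parent pointer stored at $(i,j)$ by this DP must point to $(i,j-1)$. But that places $(i,j)$ in the same subtree as $(i,j-1)$, and since $(i,j-1) \in L$, we would have $(i,j) \in L$, contradicting the hypothesis $(i,j) \in R$. Therefore $len(i,j) \neq len(i,j-1)$, and combined with Lemma 5 we conclude $len(i,j) = len(i,j-1) + 1$.

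There is really no obstacle here beyond making sure the tiebreaking argument is stated correctly: the subtlety to watch is that the remark following Lemma 4 phrases matters in terms of the parent choice being \emph{strictly better} than the $L$ option, and I want to reduce this to the concrete statement that the $\leftarrow$ pointer is not even a valid maximizer, which is exactly what forces the length to strictly increase across the $L$/$R$ boundary.
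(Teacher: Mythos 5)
Your proof is correct and matches the paper's reasoning: the paper likewise combines Lemma 5's bound $0 \leq len(i,j)-len(i,j-1) \leq 1$ with the observation that, under the tiebreaking rule favoring $\leftarrow$, a node of $R$ whose left neighbor lies in $L$ must have rejected $\leftarrow$ because its chosen parent was strictly better. Your contradiction phrasing (if the lengths were equal, the DP would have set $parent(i,j) = \leftarrow$ and placed $(i,j)$ in $L$) is just the contrapositive of that same argument.
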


Now, if we were to reconnect the two subtrees by setting $parent(1,k+1) = \leftarrow$, we would lower the length values of all nodes in $R$ by exactly $1$ (since we lose the diagonal of $parent(1,k+1)$).  But since every boundary node's old parent choice was strictly better than the option in $L$, lowering the length values of all nodes in $R$ does not result in any inconsistencies in the length table, which means we have a shortest path tree.  This is not necessarily a lowest shortest path tree; however, Corollary 2 tells us that after this reconnection, every boundary node $(i,j) \in R$ satisfies $len(i,j) = len(i,j-1)$, which means in a lowest shortest path tree, $parent(i,j) = \leftarrow$.  Furthermore, every node in $L$ has its length unchanged, and none of its potential parents gained length, so its parent pointer is unchanged.  Similarly, every non-boundary node in $R$ loses exactly one unit of length, as do all of its potential parents, so its parent pointer is also unchanged.

This means if we set the parent of every boundary node to be $\leftarrow$, we now have a lowest shortest path tree, rooted at $(1,0)$.  Note that we can walk along the graph to find all boundary nodes by moving down or right (or both), which means we can perform this rerooting process in $O(m + n) = O(n)$ time.  The pseudocode is given below, where we assume that the input tree is rooted at $(root-1, 0)$.

\codebox{
\<$\procdecl{Re-Root}(root, m, n, parent)$ \\
\li\>  $i = root$ \\
\li\>  $j = 1$ \\
\li\>  \While $j \leq n$ and $parent[i, j] \neq \nwarrow$ \\
\li\>\>  $j = j + 1$ \\
\li\>  \If $j > n$ \\
\li\>\>  \Return \\
\li\>  $parent[i, j] = \leftarrow$ \\
\li\>  \While $i < 2m$ and $j < n$ \\
\li\>\>  \If $parent[i + 1, j]$ == $\uparrow$ \\
\li\>\>\>  $i = i + 1$ \\
\li\>\>\>  $parent[i, j] = \leftarrow$ \\
\li\>\>  \ElseIf $parent[i + 1, j + 1]$ == $\nwarrow$ \\
\li\>\>\>  $i = i + 1$ \\
\li\>\>\>  $j = j + 1$ \\
\li\>\>\>  $parent[i, j] = \leftarrow$ \\
\li\>  \Else \\
\li\>\>\>  $j = j + 1$ \\
\li\>  \While $i < 2m$ and $parent[i + 1, j]$ == $\uparrow$ \\
\li\>\>  $i = i + 1$ \\
\li\>\>  $parent[i, j] = \leftarrow$ \\
}

With this process in place, we can now extract the $m$ shortest paths from $G_{AA,B}$ by computing the lowest shortest path tree of $G_{AA,B}$, and then alternating between reading the shortest path from the root to the appropriate end node and re-rooting the lowest shortest path tree one node lower.  We only need to do so $m$ times, which means the entire process runs in $O(mn)$ time.  The pseudocode is given below.

\codebox{
\<$\procdecl{CLCS}(A, B)$ \\
\li\>  $m = A.length$ \\
\li\>  $n = B.length$ \\
\li\>  let $len[0 \dots 2m, 0 \dots n]$ and $parent[0 \dots 2m, 0 \dots n]$ be new tables \\
\li\>  $\proc{LCS}(AA, B, len, parent)$ \\
\li\>  $S = \proc{Trace-LCS}(m, n, A, parent)$ \\
\li\>  \For $i = 1$ \To $m-1$ \\
\li\>\>  $\proc{Re-Root}(i, m, n, parent)$ \\
\li\>\>  $\hat{S} = \proc{Trace-LCS}(m+i, n, A, parent)$ \\
\li\>\>  \If $S.length < \hat{S}.length$ \\
\li\>\>\>  $S = \hat{S}$ \\
\li\>  \Return $S$ \\
}

\section{Implementation}

If we need to return the subsequence string, we must follow the pseudocode above.  If we only need to return the string's length, we can skip the traceback subroutine and instead simply read off the entry in the length table corresponding to the start of the traceback.  In this case, we need to update the far edge of the length table after each reroot, decrementing the lengths by 1.  In addition to simplifying the code, this change reduces the cache inefficiency of the algorithm, resulting in faster runtimes in practice.

It is worth noting that implementing this algorithm requires far less effort than either the $O(mn \lg m)$ algorithm given in~\cite{M90} or the $O(n^2)$ algorithm given in~\cite{LMS98}.

\section{Conclusions}

We have presented a practical algorithm that solves $CLCS$ in $O(mn)$ time.  Note that the proof of correctness relies heavily on the fact that all edges in the graph have weight 1.  This means that this algorithm does not naturally apply to the usual extensions of the problem, such as edit distance and dynamic time warping.  It may be worth investigating whether the algorithm can be adapted to certain classes of edge weights.

\bibliography{writeup}

\end{document}